\newcommand{\Bdry}{\Phi}
\newcommand{\Spars}{\mu}
\newcommand{\nullity}{\mathrm{nullity}}
\newcommand{\partsize}{\Delta}
\begin{document}

\title{Markov chain methods for small-set expansion}

\author{Ryan O'Donnell\thanks{Department of Computer Science, Carnegie Mellon University. Supported by NSF grants CCF-0747250 and CCF-0915893, and by a Sloan fellowship.} \and David Witmer\thanks{Department of Computer Science, Carnegie Mellon University.}}

\maketitle

\begin{abstract}
Consider a finite irreducible Markov chain with invariant distribution~$\pi$.  We use the inner product induced by $\pi$ and the associated heat operator to simplify and generalize some results related to graph partitioning and the small-set expansion problem.  For example, Steurer showed a tight connection between the number of small eigenvalues of a graph's Laplacian and the expansion of small sets in that graph. We give a simplified proof which generalizes to the nonregular, directed case. This result implies an approximation algorithm for an ``analytic" version of the Small-Set Expansion Problem, which, in turn, immediately gives an approximation algorithm for Small-Set Expansion.  We also give a simpler proof of a lower bound on the probability that a random walk stays within a set; this result was used in some recent works on finding small sparse cuts.

\ignore{
    Consider an irreducible reversible Markov chain on state space $V$, with $|V| = n$ and invariant distribution~$\pi$.  Let $0 = \lambda_1 \leq \lambda_2 \leq \cdots \lambda_n \leq 2$ be the eigenvalues of its Laplacian operator.  We give a simple spectral condition under which there exists a unit vector $f \in L^2(V,\pi)$ with $\|f\|_1^2 \leq \delta$ and $\la f, Lf \ra \leq \eps$.  (Using a standard Cheeger inequality, this implies the existence of a set $S \subseteq V$ with measure at most $O(\delta)$ and expansion at most $O(\sqrt{\eps})$.) As a consequence we reprove a result of Steurer's stating that if the Laplacian has many small eigenvalues, there exists a small set with small expansion.  We then show that for any $k \in [n]$ and small $\alpha > 0$, there is always a set $S \subseteq V$ with measure at most $O(k^{-1+\alpha})$ and expansion at most $\sqrt{\lambda_k \log_k n} \cdot O(\alpha^{-1/2})$.  We also reprove a lower bound on the probability that a random walk stays within a set using similar spectral methods.
}
\end{abstract}

\section{Overview}

Graph partitioning using spectral methods has recently been the subject of intensive study.  Many results in this area have been proven using discrete-time random walks.  However, these techniques work best when applied to regular graphs with nonnegative eigenvalues.  As a result, it has become standard to move to a lazy version of a graph by adding self-loops, i.e. using $(I + K)/2$ instead of~$K$ as the adjacency matrix.  Much work has also focused on regular graphs only or considered the normalized Laplacian $D^{-1/2}LD^{-1/2}$.

In this work we show that these problems can be avoided using Markov chain techniques, leading to simpler and more general proofs of results related to spectral graph partitioning.  Rather than using discrete-time random walks, we consider continuous-time random walks and the associated heat operator.  ``Smoothing out" the random walk makes the eigenvalues nonnegative, avoiding the need to move to lazy graphs and allowing our techniques to be directly applied to the original instance.  In addition, we use the inner product defined with respect the invariant distribution $\pi$ of the Markov chain representing a random walk on the graph.  We are then able to use our methods directly on nonregular graphs.

We will now give a brief description of some previous results in spectral graph partitioning.  Let $G = (V,E)$ be a graph on $n$ vertices.  Let~$K$ be its (normalized) adjacency matrix, let~$L$ be its (normalized) Laplacian matrix (namely $I - K$), and let $0 = \lambda_1 \leq \lambda_2 \leq \cdots \leq \lambda_n \leq 2$ be the eigenvalues of~$L$.  The conductance $\Bdry[S]$ of a set $S \subseteq V$ is defined to be $\frac{E(S, \overline{S})}{\sum_{v \in S}\deg(v)}$.  The conductance profile of $G$, denoted  $\Phi_G$ , was defined by Lov\'{a}sz and Kannan~\cite{LK99} as
\[
    \Phi_G(r) = \min \{ \Bdry[S] :  S \subseteq V, \mu[S] \leq r\}.
\]

Cheeger's inequality for graphs~\cite{AM85,Alo86,SJ89} states that $V$ can be partitioned into nonempty $S_1$, $S_2$ such that $\Bdry[S_i] \leq O(\sqrt{\lambda_2})$.  Very recently, Louis, Raghavendra, Tetali, and Vempala~\cite{LRTV12} and Lee, Oveis Gharan, and Trevisan~\cite{LOT12}  have given a ``higher order Cheeger inequality'' involving higher eigenvalues.  Specifically, the two results show that for any~$k$, one can partition $V$ into $\Omega(k)$ disjoint nonempty sets $S_i$, each of which has conductance $\Bdry[S_i] \leq O(\sqrt{\lambda_k \log k})$.  Since one of these parts has volume $\mu[S_i] \coloneqq |S_i|/|V| \leq O(1/k)$ we may conclude that
\begin{equation} \label{eqn:LOT}
    \Phi_G(\tfrac{\text{const}}{k}) \leq O(\sqrt{\lambda_k \log k}).
\end{equation}

As noted in these works, for a fixed~$k$ the ``extra factor'' of $\Theta(\sqrt{\log k})$ in~\eqref{eqn:LOT} is necessary; indeed this is true~\cite{LOT12} for all $k \leq \log_2 n$.  However, somewhat intriguingly, the extra factor becomes \emph{unnecessary} once $k$ is as large as $n^{\Omega(1)}$ --- at least, if one is willing to compromise somewhat on the volume parameter.  Specifically, Arora, Barak, and Steurer~\cite{ABS10} showed for regular graphs that
\begin{equation} \label{eqn:ABS}
    \Phi_G(O(k^{-1/100})) \leq O(\sqrt{\lambda_k \log_k n}).
\end{equation}
In his thesis, Steurer~\cite{Ste10a} improved this bound to
\begin{equation}
\Phi_G(k^{-1+1/A}) \leq O(\sqrt{A \lambda_k \log_k n}) \quad \text{for any (sufficiently large) constant $A$.}
\end{equation}
Using Markov chain methods, we give what we feel is a much simpler proof of this result, which also works for the nonregular (and also directed) case.  Our result also implies an approximation algorithm for an ``analytic" version of the Small-Set Expansion problem.  This, in turn, immediately gives an approximation algorithm for Small-Set Expansion by a standard version of Cheeger's Inequality,

In somewhat related recent work, Oveis Gharan and Trevisan~\cite{OT12} proved a weaker version of this bound with $k^{-1/3}$ in place of $k^{-1+1/A}$.  The main point of that work, along with the independent work of Kwok and Lau~\cite{KL12} give a polynomial-time algorithm for the Small-Set Expansion problem in an unweighted (nonregular) graph $G = (V,E)$ with the following guarantee:  if there exists $S \subseteq V$ with $\Spars[S] \leq \delta$ and $\Bdry[S] \leq \eps$, the algorithm finds $T \subseteq V$ with $\Spars[T] \leq O(\delta) \cdot (\delta |E|)^{\alpha}$ and $\Spars[S] \leq O(\sqrt{\eps/\alpha})$ (for any small $\alpha > 0$).  To achieve this, both papers prove a theorem stating that for any $S \subseteq V$ and integer $t > 0$, the probability that a $t$-step random walk starting from a random $x \in S$ stays entirely within $S$ is at least  $\left(1-\frac{\Phi[S]}{2}\right)^t$.  We also give a simpler proof of this result for continuous-time random walks.

\subsection{Our results}

\subsubsection{Bounding the spectral profile}
In this work we provide a different, simple proof of Steurer's improved result using continuous-time random walks instead of lazy discrete-time random walks:
\begin{theorem} \label{thm:our-conductance-profile}
    In any strongly connected graph $G$, $\Phi_G(16 k^{-1+1/A}) \leq 2\sqrt{A} \cdot \sqrt{\lambda_k \log_k n}$ for any real $A \geq 3$.
\end{theorem}
\noindent For example, $\Phi_G(k^{-.999}) \leq O(\sqrt{\lambda_k \log_k n})$ for $k$ sufficiently large.  See Section~\ref{sec:prelims} for the appropriate definitions of $\Phi_G$, $L$, $\lambda_i$, etc.\ in the context of general graphs~$G$.\\

In fact, our result is stronger than this in that we are able to directly bound the \emph{spectral profile} of~$G$.  (The same is true of the result in Arora--Barak--Steurer~\cite{ABS10} and in Steurer's thesis~\cite{Ste10a}.)  Recall that the spectral profile $\Lambda_G$ of $G$, introduced by Goel, Montenegro, and Tetali~\cite{GMT06}, is defined by
\[
    \Lambda_G(r) = \min \left\{ \tfrac{\la f, L f\ra}{\|f\|_2^2} : \text{nonzero } f \co V \to \R^{\geq 0} \text{ with } \pi(\supp(f)) \leq r\right\}.
\]
Goel, Montenegro, and Tetali showed that the ``Cheeger rounding analysis'' yields the following relationship with conductance profile: $\Phi_G(r) \leq \sqrt{2\Lambda_G(r)}$ for all~$r$.\footnote{Actually,~\cite{GMT06} defined $\Lambda_G(r)$ as the minimization of $\tfrac{\la f, L f\ra}{\|f\|_2^2 - \|f\|_1^2}$. But their proof of this relationship still goes through.} As in~\cite{ABS10} we work with a slightly different definition of spectral profile, for technical convenience:
\[
    \Lambda'_G(r) = \min \{ \Bdry[f] : \mu[f] \leq r \}, \qquad \text{where } \quad
    \Bdry[f] = \frac{\la f, L f\ra}{\|f\|_2^2}, \quad \mu[f] = \frac{\|f\|_1^2}{\|f\|_2^2}
\]
are appropriate generalizations of boundary size and volume to functions $f \co V \to \R$.  (These definitions agree with our earlier ones when $f$ is the $0$-$1$ indicator of a set $S \subseteq V$.)  As noted in~\cite[Lemma~A.2]{ABS10} we have $\Lambda_G(4r) \leq 2 \Lambda'_G(r)$ for all~$r$.  (A similar reverse connection also holds.)  Thus:
\begin{theorem} \label{thm:GMT} (Essentially from~\cite{GMT06}.) $\Phi_G(4r) \leq 2\sqrt{\Lambda'_G(r)}$ for all~$r$.
\end{theorem}

We use this connection to obtain Theorem~\ref{thm:our-conductance-profile}; our main theorem is in fact:
\begin{theorem} \label{thm:our-spectral-profile}
    In any strongly connected graph $G$, $\Lambda'_G(4 k^{-1+1/A}) \leq A \cdot \lambda_k \log_k n$ for any real $A \geq 3$.
\end{theorem}
This route to bounding the conductance profile is somewhat in contrast to the works~\cite{LRTV12,LOT12}, both of which combine their spectral analysis and ``rounding algorithm''.\\

Indeed, in this work we consider the ``analytic'' version of the Raghavendra--Steurer~\cite{RS10} Small-Set Expansion problem: given a graph $G = (V,E)$ with the promise that there is a function $f \co V \to \R$ which has $\mu[f] \leq \delta$ and $\Bdry[f] \leq \eps$, find a function $g \co V \to \R$ with $\mu[g] \leq O(\delta)$ and $\Bdry[g]$ as small as possible.  Following~\cite{ABS10}, we provide an eigenspace enumeration lemma which, when combined with Theorem~\ref{thm:our-spectral-profile}, yields the following:
\begin{theorem} \label{thm:our-alg}
    For any $\alpha \leq \frac13$ and $C \geq 1$, there exists an algorithm running in time $\exp(O(n^\alpha) \cdot\tfrac{1}{\delta}\log(C/\delta))$ with the following guarantee:  If there exists $f \co V \to \R$ with $\Spars[f] \leq \delta \leq 1/2$ and $\Bdry[f] \leq \eps \leq 1/4$, the algorithm finds $g \co V \to \R$ with $\Spars[g] \leq \delta \cdot(1+1/C)$ and $\Bdry[g] \leq O(\frac{C^2}{\alpha \delta}) \cdot \eps$.
\end{theorem}

As a byproduct, using Theorem~\ref{thm:GMT} we can immediately deduce the following approximation algorithm for Small-Set Expansion:
\begin{corollary} \label{cor:alg}
    Fix any small constants $\alpha, \delta > 0$.  Then there is an algorithm running in time $\exp(O(n^\alpha))$ with the following guarantee:  If there exists $S \subseteq V$ with $\Spars[S] \leq \delta$ and $\Bdry[S] \leq \eps$, the algorithm finds $T \subseteq V$ with $\Spars[T] \leq 5\delta$ and $\Bdry[T] \leq O(\sqrt{\eps})$.

    More generally, one can obtain $\Bdry[T] \leq O(\eps^{\beta/2})$ in time $\exp(O(n^{\alpha \eps^{1-\beta}}))$ for any $0 < \beta \leq 1$.
\end{corollary}

This result is incomparable with the Arora--Barak--Steurer Small-Set Expansion algorithm: their work had $O(\eps^{\beta/3})$ in place of $O(\eps^{\beta/2})$ and was analyzed only for regular graphs. On the other hand, our Corollary~\ref{cor:alg} holds only for $\delta$ a constant, whereas their algorithm works for $\delta$ as small as~$n^{-\eps^{1-\beta}}$ (which is the more interesting parameter range).

\subsubsection{Continuous-time random walks}
In~\cite{OT12}, Oveis Gharan and Trevisan prove a lower bound on the probability that a random walk stays within a set. (Kwok and Lau~\cite{KL12} prove a similar but somewhat weaker bound.)  Specifically, they show:
\begin{theorem} \label{thm:ot12}
    Let $G = (V,E)$ be an undirected graph with invariant distribution~$\pi$.  Let $\emptyset \neq S \subseteq V$ and let $t > 0$ be an integer.  Choose $\bx \sim \pi$ conditioned on $\bx \in S$, and then perform a $t$-step discrete-time random walk from~$\bx$.  Then the probability that the walk stays entirely within~$S$ is at least $\left(1-\frac{\Phi[S]}{2}\right)^t$.
\end{theorem}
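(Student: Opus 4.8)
The plan is to rewrite the ``walk stays inside $S$'' probability as a quadratic form in a self-adjoint, positive-semidefinite contraction, and then finish with a single application of Jensen's inequality. Laziness enters only to guarantee the positive-semidefiniteness; so the relevant walk in Theorem~\ref{thm:ot12} should be read as the lazy walk $W=\tfrac12(I+K)$, and this is necessary (if $S$ is one endpoint of a single edge, the non-lazy walk leaves $S$ immediately while the claimed bound $(1-\tfrac12\Phi[S])^t=(\tfrac12)^t$ is positive). I would work throughout in $L^2(V,\pi)$ with $\langle f,g\rangle=\E_{\bx\sim\pi}[f(\bx)g(\bx)]$. Let $K$ be the walk operator of $G$; since $G$ is undirected, $K$, and hence $W$, is self-adjoint, and the eigenvalues of $W$ lie in $[0,1]$, i.e.\ $W\succeq0$. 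Let $P$ denote the orthogonal projection onto $\{f:\supp(f)\subseteq S\}$ --- equivalently, multiplication by $\mathbf 1_S$ --- and set $u=\mathbf 1_S/\|\mathbf 1_S\|_2$, a unit vector with $Pu=u$ and $\|\mathbf 1_S\|_2^2=\pi(S)$.

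First I would establish that, with $\bx_0$ drawn from $\pi$ conditioned on lying in $S$,
\[
  \Pr\bigl[\bx_0,\bx_1,\ldots,\bx_t\in S\bigr] \;=\; \bigl\langle u,\,(PWP)^t\,u\bigr\rangle .
\]
This is a routine unwinding of matrix entries: one has $(PWP)^t(x,y)=\mathbf 1_S(x)\mathbf 1_S(y)\sum_{x_1,\ldots,x_{t-1}\in S}\prod_{i=1}^tW(x_{i-1},x_i)$, hence $\bigl((PWP)^t\mathbf 1_S\bigr)(x)=\mathbf 1_S(x)\cdot\Pr[\bx_1,\ldots,\bx_t\in S\mid\bx_0=x]$, and averaging against $\pi$ and dividing by $\pi(S)=\|\mathbf 1_S\|_2^2$ yields the left-hand side (the conditioning $\bx_0\in S$ is automatic).

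Next, set $T=PWP$. This operator is self-adjoint, and for every $f$ we have $\langle f,Tf\rangle=\langle Pf,\,W(Pf)\rangle\ge0$ because $W\succeq0$; thus $T\succeq0$. Its relevant Rayleigh quotient is computed in one line using $Pu=u$ and $L=I-K$:
\[
  \langle u,Tu\rangle=\langle u,Wu\rangle=1-\tfrac12\langle u,Lu\rangle=1-\tfrac12\,\Phi[\mathbf 1_S]=1-\tfrac12\,\Phi[S],
\]
invoking the paper's identification $\Phi[\mathbf 1_S]=\Phi[S]$ of the functional with the combinatorial boundary. Now diagonalize $T$, say with eigenvalues $\theta_j\in[0,1]$ and $u=\sum_jc_j\psi_j$, $\sum_jc_j^2=1$. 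Since $x\mapsto x^t$ is convex on $[0,\infty)$ and $\{c_j^2\}_j$ is a probability distribution, Jensen's inequality gives $\langle u,T^tu\rangle=\sum_jc_j^2\theta_j^t\ge\bigl(\sum_jc_j^2\theta_j\bigr)^t=\bigl(1-\tfrac12\Phi[S]\bigr)^t$, which together with the first step proves the theorem.

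There is no deep obstacle here; the work is in the packaging. Two points are load-bearing. First, laziness is exactly what is used: the non-lazy operator $PKP$ need not be positive-semidefinite (e.g.\ when $S$ consists of the two endpoints of an edge it has a negative eigenvalue), and in any case $\langle u,PKP\,u\rangle=1-\Phi[S]$, so without laziness both the Jensen step and the factor $\tfrac12$ collapse --- and indeed the bound itself is false for the non-lazy walk. Second, the argument is operator-agnostic, which is where the continuous-time viewpoint of this paper pays off: it works verbatim with $W$ replaced by any self-adjoint contraction $W\succeq0$ that is reversible with respect to $\pi$. Taking $W=e^{-sL}$, the time-$s$ heat operator (automatically positive-semidefinite, so no laziness need ever be introduced), and using $e^{-s\lambda}\ge1-s\lambda$ to get $\langle u,e^{-sL}u\rangle\ge1-s\,\Phi[S]$, one obtains the continuous-time analogue: a walk stays in $S$ through $t$ applications of $e^{-sL}$ with probability at least $(1-s\,\Phi[S])^t$. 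This is the clean statement the paper actually proves, with $s=\tfrac12$ recovering Theorem~\ref{thm:ot12}.
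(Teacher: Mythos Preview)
The paper does not itself prove Theorem~\ref{thm:ot12}; it quotes that result from~\cite{OT12} and instead proves the continuous-time analogue, Theorem~\ref{thm:cont_rand_walk}. Your discrete-time argument for the lazy walk is correct and follows exactly the same template as the paper's continuous-time proof: both write the stay-in-$S$ probability as $\langle u, A\,u\rangle$ for the normalized indicator $u = 1_S/\|1_S\|_2$ and a self-adjoint ``restricted'' operator ($A = (PWP)^t$ for you, $A = \exp(-t L_S)$ with $L_S = I - PKP$ in the paper), identify $\langle u, PKP\,u\rangle$ with $1 - \Phi[S]$, diagonalize, and finish with Jensen (convexity of $x\mapsto x^t$ on $[0,\infty)$ for you, of $\exp$ for the paper). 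Your observation that laziness is exactly what supplies the positive-semidefiniteness needed for the Jensen step is the discrete counterpart of the paper's point that the heat semigroup is automatically positive.

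Your closing paragraph, however, misstates what the paper proves. Taking $W = e^{-sL}$ and forming $(P e^{-sL} P)^t$ captures only the event that the continuous-time walk lies in $S$ at the observation times $0, s, 2s, \ldots, ts$, not that it stays in $S$ throughout $[0, ts]$. The paper's Theorem~\ref{thm:cont_rand_walk} is the stronger statement $C(t,S) \geq \exp(-t\Phi[S])$ for real $t > 0$, and to obtain it one must restrict at the level of the \emph{generator}: with $K_S = PKP$ and $L_S = I - K_S$, the operator $H_{t,S} = \exp(-t L_S) = \E_{\btau\sim\mathrm{Poisson}(t)}[K_S^{\btau}]$ encodes precisely ``every jump of the Poisson process lands in $S$.'' Jensen then gives $\sum_i c_i^2 e^{-t\lambda_i} \geq \exp\bigl(-t\sum_i c_i^2 \lambda_i\bigr) = \exp(-t\Phi[S])$, with no $s = \tfrac12$ discretization involved; in particular the paper's bound is not $(1 - s\Phi[S])^t$, and setting $s=\tfrac12$ in your variant does not literally recover Theorem~\ref{thm:ot12} since the underlying walk is different.
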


We provide a simple proof of a similar theorem using Markov chain methods.

\begin{theorem}
\label{thm:cont_rand_walk}
In the setting of Theorem~\ref{thm:ot12}, if we instead perform a time-$t$ continuous-time random walk, the probability that the walk stays entirely within~$S$ is at least $\exp(-t\Phi[S])$.
\end{theorem}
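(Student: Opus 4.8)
The plan is to reduce to the \emph{killed} continuous-time walk --- the one that is frozen forever the instant it leaves $S$ --- and then finish with a one-line convexity argument. Work inside the subspace $W = \{f \co V \to \R : \supp(f) \subseteq S\}$ of $L^2(V,\pi)$, and let $A \co W \to W$ be the compression (principal submatrix) of the Laplacian to $S$: $Af = (Lf)|_S$, the restriction of $Lf$ to $S$ viewed as an element of $W$. Since $G$ is undirected, $L$ is self-adjoint with respect to $\pi$, hence so is $A$; and $\la f, Af\ra = \la f, Lf\ra \geq 0$ for $f \in W$, so $A \succeq 0$. For $x \in S$, let $f_t(x)$ denote the probability that the time-$t$ continuous-time walk started at $x$ never leaves $S$. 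The standard infinitesimal generator computation for the walk killed outside $S$ shows that $t \mapsto f_t \in W$ solves the backward equation $\tfrac{d}{dt} f_t = -A f_t$ with $f_0 = \mathbf 1_S$, and hence $f_t = e^{-tA}\mathbf 1_S$. (This is the step where ``staying inside $S$'' gets encoded by the \emph{compressed} operator $A$ rather than by the full heat operator $e^{-tL}$.)

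Now average over the random starting point. Since $\bx$ is $\pi$ conditioned on $S$ and $\|\mathbf 1_S\|^2 = \pi(S)$, the survival probability is
\[
    p \;=\; \E_{\bx}\bigl[f_t(\bx)\bigr] \;=\; \frac{1}{\pi(S)}\sum_{x \in S}\pi(x)\,(e^{-tA}\mathbf 1_S)(x) \;=\; \frac{\la \mathbf 1_S,\, e^{-tA}\mathbf 1_S\ra}{\|\mathbf 1_S\|^2}.
\]
Decomposing $\mathbf 1_S$ in an orthonormal eigenbasis of $A$ (eigenvalues $\mu_i \geq 0$) writes $p = \sum_i w_i e^{-t\mu_i}$ and $\la \mathbf 1_S, A\mathbf 1_S\ra/\|\mathbf 1_S\|^2 = \sum_i w_i \mu_i$ for the \emph{same} weights $w_i \geq 0$ with $\sum_i w_i = 1$. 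Since $\lambda \mapsto e^{-t\lambda}$ is convex, Jensen's inequality yields
\[
    p \;\geq\; \exp\Bigl(-t\,\tfrac{\la \mathbf 1_S,\, A\mathbf 1_S\ra}{\|\mathbf 1_S\|^2}\Bigr) \;=\; \exp\Bigl(-t\,\tfrac{\la \mathbf 1_S,\, L\mathbf 1_S\ra}{\|\mathbf 1_S\|^2}\Bigr) \;=\; \exp\bigl(-t\,\Phi[\mathbf 1_S]\bigr) \;=\; \exp(-t\,\Phi[S]),
\]
using $\la \mathbf 1_S, A\mathbf 1_S\ra = \la \mathbf 1_S, L\mathbf 1_S\ra$ and the definition $\Bdry[f] = \la f, Lf\ra/\|f\|_2^2$ (which reduces to the conductance $\Phi[S]$ on the indicator of $S$).

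None of these steps is hard individually; the one that needs care is the first --- pinning down the exact operator whose heat flow computes the survival probability. One must make leaving $S$ irreversible (otherwise one only computes the probability of being \emph{back} in $S$ at time $t$, which is strictly larger), and one must verify that the compression $A$ is the correct generator of that killed walk. This is also where undirectedness of $G$ is genuinely used: it makes $A$ self-adjoint, which is what legitimizes the eigen-expansion underlying the Jensen step.
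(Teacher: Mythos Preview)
Your argument is correct and is essentially the paper's proof: both introduce the killed/restricted operator on $S$ (the paper's $L_S = id - K_S$ coincides with your compression $A$ on functions supported in $S$), express $\Phi[S]$ and the survival probability as $\la \mathbf 1_S, A\,\mathbf 1_S\ra/\|\mathbf 1_S\|^2$ and $\la \mathbf 1_S, e^{-tA}\mathbf 1_S\ra/\|\mathbf 1_S\|^2$ respectively, expand in an eigenbasis, and apply Jensen. The only cosmetic difference is how the identity $C(t,S)=\la \mathbf 1_S, e^{-tA}\mathbf 1_S\ra/\pi(S)$ is justified --- the paper uses the Poisson representation of the continuous-time walk (so $H_{t,S}=\E_{\btau\sim\mathrm{Poisson}(t)}[K_S^{\btau}]$), while you invoke the backward equation for the killed process; these are equivalent standard facts.
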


\ignore{
Let $G = (V,E)$ be a graph on $n$ vertices which, for this discussion, we assume is undirected and $d$-regular. Let~$K$ be its (normalized) adjacency matrix, let~$L$ be its (normalized) Laplacian matrix (namely $I - K$), and let $0 = \lambda_1 \leq \lambda_2 \leq \cdots \leq \lambda_n \leq 2$ be the eigenvalues of~$L$. It is a simple fact that if $\lambda_2 = 0$ then $G$ is disconnected; i.e., $V$ can be partitioned into two nonempty parts $S_1$, $S_2$, each of which has no edges on its boundary.  Somewhat less simple is \emph{Cheeger's inequality} for graphs~\cite{AM85,Alo86,SJ89}, which gives a robust version of this fact: if $\lambda_2$ is small then $V$ can be partitioned into nonempty $S_1$, $S_2$, each of which has only a small fraction of its edges on its boundary.  More precisely, each $S_i$ has conductance $\Bdry[S_i] \coloneqq \frac{E(S_i, \overline{S_i})}{d |S_i|} \leq O(\sqrt{\lambda_2})$.

Regarding higher eigenvalues, another simple fact is that $\lambda_k = 0$ if and only if $G$ has at least~$k$ connected components.  (In other words, $\nullity(L)$ is the number of connected components of~$G$.)  It is natural to ask if there is an associated ``higher order Cheeger inequality''.  Positive results in this direction were recently obtained by Louis, Raghavendra, Tetali, and Vempala~\cite{LRTV11,LRTV12} and by Lee, Oveis Gharan, and Trevisan~\cite{LOT12}.  Specifically, the latter two results show that for any~$k$, one can partition $V$ into $\Omega(k)$ disjoint nonempty sets $S_i$, each of which has conductance $\Bdry[S_i] \leq O(\sqrt{\lambda_k \log k})$.  Since one of these parts has volume $\mu[S_i] \coloneqq |S_i|/|V| \leq O(1/k)$ we may conclude that
\begin{equation} \label{eqn:LOT}
    \Phi_G(\tfrac{\text{const}}{k}) \leq O(\sqrt{\lambda_k \log k}),
\end{equation}
where $\Phi_G$ is the \emph{conductance profile} of $G$, defined by Lov\'{a}sz and Kannan~\cite{LK99} as
\[
    \Phi_G(r) = \min \{ \Bdry[S] :  S \subseteq V, \mu[S] \leq r\}.
\]
(In fact,~\cite{LOT12} shows that ``const'' may be arbitrarily close to~$1$.)  We remark that the results of~\cite{LOT12,LRTV12} are shown in the more general context of arbitrary edge-weighted undirected graphs.

As noted in these works, for a fixed~$k$ the ``extra factor'' of $\Theta(\sqrt{\log k})$ in~\eqref{eqn:LOT} is necessary; indeed this is true~\cite{LOT12} for all $k \leq \log_2 n$.  However, somewhat intriguingly, the extra factor becomes \emph{unnecessary} once $k$ is as large as $n^{\Omega(1)}$ --- at least, if one is willing to compromise somewhat on the volume parameter.  Specifically, Arora, Barak, and Steurer~\cite{ABS10} showed (for regular graphs) that
\begin{equation} \label{eqn:ABS}
    \Phi_G(O(k^{-1/100})) \leq O(\sqrt{\lambda_k \log_k n}).
\end{equation}
This was the key technical tool needed for their subexponential-time algorithms for Small-Set Expansion and Unique Games.  Regarding the parameter $k^{-1/100}$, Arora, Barak, and Steurer wrote that they ``do not know if the constant~$100$ [can] be replaced with $1 + o(1)$ (though such a strong bound, if true, will require a different proof)''.

In his thesis, Steurer~\cite{Ste10a} essentially resolves this question by showing that
\begin{equation}
\Phi_G(k^{-1+1/A}) \leq O(\sqrt{A \lambda_k \log_k n}).
\end{equation}

\subsection{Our results}
In this work we provide a different, simple proof of Steurer's improved result using continuous-time random walks instead of lazy discrete-time random walks:
\begin{theorem} \label{thm:our-conductance-profile}
    In any strongly connected graph $G$, $\Phi_G(16 k^{-1+1/A}) \leq 2\sqrt{A} \cdot \sqrt{\lambda_k \log_k n}$ for any real $A \geq 3$.
\end{theorem}
\noindent For example, $\Phi_G(k^{-.999}) \leq O(\sqrt{\lambda_k \log_k n})$ for $k$ sufficiently large.  See Section~\ref{sec:prelims} for the appropriate definitions of $\Phi_G$, $L$, $\lambda_i$, etc.\ in the context of general graphs~$G$.\\

In fact, our result is stronger than this in that we are able to directly bound the \emph{spectral profile} of~$G$.  (The same is true of the~\cite{ABS10} and Steurer's CITE results.)  Recall that the spectral profile $\Lambda_G$ of $G$, introduced by Goel, Montenegro, and Tetali~\cite{GMT06}, is defined by
\[
    \Lambda_G(r) = \min \left\{ \tfrac{\la f, L f\ra}{\|f\|_2^2} : \text{nonzero } f \co V \to \R^{\geq 0} \text{ with } \pi(\supp(f)) \leq r\right\}.
\]
Goel, Montenegro, and Tetali showed that the ``Cheeger rounding analysis'' yields the following relationship with conductance profile: $\Phi_G(r) \leq \sqrt{2\Lambda_G(r)}$ for all~$r$.\footnote{Actually,~\cite{GMT06} defined $\Lambda_G(r)$ as the minimization of $\tfrac{\la f, L f\ra}{\|f\|_2^2 - \|f\|_1^2}$. But their proof of this relationship still goes through.} As in~\cite{ABS10} we work with a slightly different definition of spectral profile, for technical convenience:
\[
    \Lambda'_G(r) = \min \{ \Bdry[f] : \mu[f] \leq r \}, \qquad \text{where } \quad
    \Bdry[f] = \frac{\la f, L f\ra}{\|f\|_2^2}, \quad \mu[f] = \frac{\|f\|_1^2}{\|f\|_2^2}
\]
are appropriate generalizations of boundary size and volume to functions $f \co V \to \R$.  (These definitions agree with our earlier ones when $f$ is the $0$-$1$ indicator of a set $S \subseteq V$.)  As noted in~\cite[Lemma~A.2]{ABS10} we have $\Lambda_G(4r) \leq 2 \Lambda'_G(r)$ for all~$r$.  (A similar reverse connection also holds.)  Thus:
\begin{theorem} \label{thm:GMT} (Essentially from~\cite{GMT06}.) $\Phi_G(4r) \leq 2\sqrt{\Lambda'_G(r)}$ for all~$r$.
\end{theorem}

We use this connection to obtain Theorem~\ref{thm:our-conductance-profile}; our main theorem is in fact:
\begin{theorem} \label{thm:our-spectral-profile}
    In any strongly connected graph $G$, $\Lambda'_G(4 k^{-1+1/A}) \leq A \cdot \lambda_k \log_k n$ for any real $A \geq 3$.
\end{theorem}
This route to bounding the conductance profile is somewhat in contrast to the works~\cite{LRTV12,LOT12}, both of which combine their spectral analysis and ``rounding algorithm''.\\

Indeed, in this work we consider the ``analytic'' version of the Raghavendra--Steurer~\cite{RS10} Small-Set Expansion problem: given a graph $G = (V,E)$ with the promise that there is a function $f \co V \to \R$ which has $\mu[f] \leq \delta$ and $\Bdry[f] \leq \eps$, find a function $g \co V \to \R$ with $\mu[g] \leq O(\delta)$ and $\Bdry[g]$ as small as possible.  Following~\cite{ABS10}, we provide an eigenspace enumeration lemma which, when combined with Theorem~\ref{thm:our-spectral-profile}, yields the following:
\begin{theorem} \label{thm:our-alg}
    For any $\alpha \leq \frac13$ and $C \geq 1$, there exists an algorithm running in time $\exp(O(n^\alpha) \cdot\tfrac{1}{\delta}\log(C/\delta))$ with the following guarantee:  If there exists $f \co V \to \R$ with $\Spars[f] \leq \delta \leq 1/2$ and $\Bdry[f] \leq \eps \leq 1/4$, the algorithm finds $g \co V \to \R$ with $\Spars[g] \leq \delta \cdot(1+1/C)$ and $\Bdry[g] \leq O(\frac{C^2}{\alpha \delta}) \cdot \eps$.
\end{theorem}

As a byproduct, using Theorem~\ref{thm:GMT} we can immediately deduce the following approximation algorithm for Small-Set Expansion:
\begin{corollary} \label{cor:alg}
    Fix any small constants $\alpha, \delta > 0$.  Then there is an algorithm running in time $\exp(O(n^\alpha))$ with the following guarantee:  If there exists $S \subseteq V$ with $\Spars[S] \leq \delta$ and $\Bdry[S] \leq \eps$, the algorithm finds $T \subseteq V$ with $\Spars[T] \leq 5\delta$ and $\Bdry[T] \leq O(\sqrt{\eps})$.

    More generally, one can obtain $\Bdry[T] \leq O(\eps^{\beta/2})$ in time $\exp(O(n^{\alpha \eps^{1-\beta}}))$ for any $0 < \beta \leq 1$.
\end{corollary}

This result is incomparable with the Arora--Barak--Steurer Small-Set Expansion algorithm: their work had $O(\eps^{\beta/3})$ in place of $O(\eps^{\beta/2})$ and was analyzed only for regular graphs. On the other hand, our Corollary~\ref{cor:alg} holds only for $\delta$ a constant, whereas their algorithm works for $\delta$ as small as~$n^{-\eps^{1-\beta}}$ (which is the more interesting parameter range).

\subsection{Simultaneous work}
Independently of our work, Oveis Gharan and Trevisan~\cite{OT12} have obtained some results related to the ones in this paper.  They have proven a weaker version of our Theorem~\ref{thm:our-conductance-profile}, with $k^{-1/3}$ in place of our~$k^{-1+1/A}$.  However it seems quite plausible their proof technique would achieve the same result as Theorem~\ref{thm:our-conductance-profile} without much additional work.  They also give a polynomial-time algorithm for the Small-Set Expansion problem in an unweighted (nonregular) graph $G = (V,E)$ with the following guarantee:  if there exists $S \subseteq V$ with $\Spars[S] \leq \delta$ and $\Bdry[S] \leq \eps$, the algorithm finds $T \subseteq V$ with $\Spars[T] \leq O(\delta) \cdot (\delta |E|)^{\alpha}$ and $\Spars[S] \leq O(\sqrt{\eps/\alpha})$ (for any small $\alpha > 0$).
}

\section{Preliminaries}                 \label{sec:prelims}
Instead of directed graphs, we will use the language of Markov chains; for background, see e.g.~\cite{DS96,MT06}.

Throughout this work, $G$ will denote an irreducible Markov chain on state space~$V$ of cardinality~$n$, with no isolated states.  We will be considering elements $f$ in the vector space of functions $V \to \R$. We  write $K$ for the adjacency matrix operator: $Kf(x) = \E_{\by \sim x}[f(\by)]$, where $\by \sim x$ denotes that $\by$ is obtained by taking one step from~$x$ in the chain.  $K$ has a unique invariant probability distribution $\pi$ on $V$ which is nowhere~$0$. It gives rise to an inner product on functions, $\la f, g \ra = \Ex_{\bx \sim \pi}[f(\bx)g(\bx)]$.  We write $L = id - K$  for the Laplacian operator and $H_t = \exp(-tL)$ for the heat kernel (continuous-time transition) operator.

\begin{definition}
    Given nonzero $f \co V \to \R$ we define its \emph{analytic boundary size/conductance} to be
    \[
        \Bdry[f] = \frac{\la f, L f \ra}{\la f, f \ra} = 1 - \frac{\la f, K f \ra}{\la f, f \ra}.
    \]
    Note that if $f$ is the $0$-$1$ indicator of a set $S \subseteq V$ then $\Phi[f] = \Pr_{\bx \sim \pi, \by \sim \bx}[\by \not \in S \mid \bx \in S]$.  We will also write $\Phi[S]$ in this case.
\end{definition}

\begin{definition}
    Given a nonzero $f \co V \to \R$ we define its \emph{analytic sparsity} to be
    \[
        \Spars[f] = \frac{\|f\|_1^2}{\|f\|_2^2}.
    \]
    Note that if $f$ is the $0$-$1$ indicator of a set $S \subseteq V$ then $\Spars[f] = \pi(S)$.
\end{definition}

These definitions motivate consideration of an ``analytic'' version of the Small-Set Expansion Problem: Assuming there is an analytically sparse~$f$ with small analytic boundary, find such an~$f$.  More precisely:

\paragraph{Analytic Small-Set Expansion Problem:} Given as input $G$ with the promise that there exists $f \co V \to \R$ with $\Spars[f] \leq \delta \leq 1/2$ and $\Bdry[f] \leq \eps$, find $f' \co V \to \R$ with $\Spars[f'] \leq \delta'$ and $\Bdry[f'] \leq \eps'$.  In this bicriteria problem, we typically insist that $\delta' = O(\delta)$ and then try to minimize $\eps'$. \\

\noindent Note that the standard Small-Set Expansion problem is the above problem with the additional restriction that~$f$ and~$f'$ should be $0$-$1$-valued functions.  \\

For the remainder of this work we will assume that~$G$ is reversible.  However, this is without loss of generality since, given a non-reversible Markov chain~$G'$ with adjacency matrix operator~$K'$, we can replace it with the reversible Markov chain~$G$ having adjacency matrix operator $K = \frac{K' + {K'}^*}{2}$.  The chain $G$ has the same invariant distribution~$\pi$ as~$G'$ which means that the notion of analytic sparsity is unchanged.  Further, if $L$ and $L'$ are the Laplacians of $G$ and $G'$, respectively, then $\la f, L f \ra = \la f, L' f \ra$ for any $f \co V \to \R$; hence the notion of analytic boundary is also unchanged.

Given a reversible chain~$G$, the operators $K$, $L$, and $H_t$ have a common orthogonal basis of eigenfunctions.  We will write $0 = \lambda_1 \leq \lambda_2 \leq \cdots \leq \lambda_{n}$ for the eigenvalues of~$L$; note that the $i$th eigenvalue of $K$ is $1 - \lambda_i$ and the $i$th eigenvalue of $H_t$ is $\exp(-t\lambda_i)$.  All of our theorems which mention the eigenvalues~$\lambda_i$ hold also for non-reversible chains~$G'$, with the $\lambda_i$'s being those for the associated reversible chain~$G$.

Following~\cite{ABS10}, our algorithm for the Analytic Small-Set Expansion problem (Theorem~\ref{thm:our-alg}) breaks into two cases, depending on the ``analytic nullity'' of~$L$ (called ``threshold rank'' in~\cite{ABS10}):
\begin{definition}
    We define $\nullity_\eta(L) = \#\{i : \lambda_i \leq \eta\}$.  Note that $\nullity_0(L)$ is the usual nullity.
\end{definition}

\begin{remark}
Throughout we will present algorithms in the model of exact arithmetic. E.g., we will assume that given~$G$, the eigenvalues and eigenfunctions of $L$ can be computed exactly.  We believe (but have not verified) that our results can be extended to standard computational models (e.g., Turing machines).
\end{remark}

\section{A new bound on the spectral profile}

Here we give our new spectral criterion, based on the trace of the heat kernel, which ensures the existence of an analytically sparse function with small analytic boundary.
\begin{theorem}                                     \label{thm:heat-algorithm}
    Fix  $0 < \gamma \leq 1 \leq \partsize$ and suppose there exists $t > 0$ such that
    \begin{equation} \label{eqn:key}
        \tr(H_t) - \tfrac{1}{\gamma} \tr(LH_t) \geq \partsize.
    \end{equation}
    Then in $\poly(n)$ time one can find $g \co V \to \R^{\geq 0}$ satisfying $\Spars[g] \leq 1/\partsize$ and $\Bdry[g] \leq \gamma$.
\end{theorem}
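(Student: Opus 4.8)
The plan is to exhibit the required function explicitly as a ``heat-kernel bump'' centered at a well-chosen state, and to locate that state by an averaging argument over $\pi$. Since $G$ is reversible, fix an orthonormal (with respect to $\la \cdot, \cdot \ra$) eigenbasis $\phi_1 = \mathbf{1}, \phi_2, \dots, \phi_n$ of $L$ with $L\phi_i = \lambda_i \phi_i$; here $\mathbf{1}$ is the constant function, which is the $\lambda_1 = 0$ eigenfunction and satisfies $\|\mathbf{1}\|_2 = 1$. For each state $x \in V$, define $g_x \co V \to \R$ by $g_x = \tfrac{1}{\pi(x)}H_{t/2}\mathbf{1}_x$, where $\mathbf{1}_x$ is the $0$-$1$ indicator of $\{x\}$; unwinding the eigendecomposition gives the equivalent formula $g_x(y) = \sum_i e^{-t\lambda_i/2}\phi_i(x)\phi_i(y)$. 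Two facts make $g_x$ the right candidate. First, $g_x \geq 0$ pointwise: $H_{t/2} = e^{-t/2}\exp(\tfrac t2 K)$ has nonnegative matrix entries because $K$ does, and $\pi$ is everywhere positive. Second, $\|g_x\|_1 = \la g_x, \mathbf{1}\ra = 1$, by orthonormality together with $\phi_1 = \mathbf{1}$; hence $\Spars[g_x] = 1/\|g_x\|_2^2$. (This is the one place the continuous-time walk is used: it makes $H_{t/2}$ entrywise nonnegative without passing to a lazy chain.)

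Next I would compute, in closed form, $\|g_x\|_2^2 = \sum_i e^{-t\lambda_i}\phi_i(x)^2$ and $\la g_x, L g_x\ra = \sum_i \lambda_i e^{-t\lambda_i}\phi_i(x)^2$ (these follow directly from the eigenfunction formula for $g_x$ and $L\phi_i = \lambda_i\phi_i$). The key observation is that averaging over $\bx \sim \pi$ recovers traces: $\E_{\bx \sim \pi}[\|g_{\bx}\|_2^2] = \sum_i e^{-t\lambda_i} = \tr(H_t)$ and $\E_{\bx \sim \pi}[\la g_{\bx}, L g_{\bx}\ra] = \sum_i \lambda_i e^{-t\lambda_i} = \tr(LH_t)$, using $\E_{\bx\sim\pi}[\phi_i(\bx)^2] = \|\phi_i\|_2^2 = 1$. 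Consequently the function $\Psi(x) \coloneqq \|g_x\|_2^2 - \tfrac1\gamma \la g_x, L g_x\ra$ has $\E_{\bx \sim \pi}[\Psi(\bx)] = \tr(H_t) - \tfrac1\gamma \tr(LH_t) \geq \partsize$ by hypothesis~\eqref{eqn:key}. Therefore there exists a state $x^\star$ with $\Psi(x^\star) \geq \partsize$.

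Finally I would check that $g \coloneqq g_{x^\star}$ has the claimed properties. Since every $\lambda_i \geq 0$, we have $\la g_{x^\star}, L g_{x^\star}\ra \geq 0$, so $\|g_{x^\star}\|_2^2 \geq \Psi(x^\star) \geq \partsize$, giving $\Spars[g] = 1/\|g_{x^\star}\|_2^2 \leq 1/\partsize$. And from $\Psi(x^\star) \geq \partsize \geq 0$ we get $\tfrac1\gamma \la g_{x^\star}, L g_{x^\star}\ra \leq \|g_{x^\star}\|_2^2$, i.e. $\Bdry[g] = \la g_{x^\star}, L g_{x^\star}\ra / \|g_{x^\star}\|_2^2 \leq \gamma$; also $g \not\equiv 0$ since $\|g\|_1 = 1$. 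For the algorithm: compute the eigendecomposition of $L$ (hence of $H_t$) in the exact-arithmetic model, evaluate $\Psi(x)$ at every state, take any maximizer $x^\star$, and output $g_{x^\star}$ (via its eigenfunction formula or as $\tfrac{1}{\pi(x^\star)}H_{t/2}\mathbf{1}_{x^\star}$); this runs in $\poly(n)$ time. I do not expect a genuine obstacle here: the argument is essentially a first-moment (averaging) argument once the bump functions are set up. The only points requiring care are the $\pi$-inner-product normalizations (which make $\|g_x\|_1 = 1$ and the averages equal to traces) and the entrywise nonnegativity of $H_{t/2}$, which is what guarantees $g_x \geq 0$ as the theorem demands.
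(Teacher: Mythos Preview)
Your proof is correct and is essentially the same as the paper's: your $g_x = \tfrac{1}{\pi(x)}H_{t/2}\mathbf{1}_x$ is exactly the paper's $H_{t/2}\phi_x$, and both arguments use averaging over $\bx\sim\pi$ to turn the trace condition into an inequality at a single state, then read off $\Spars[g]\leq 1/\partsize$ and $\Bdry[g]\leq\gamma$. The only cosmetic difference is that you verify the trace identity via the eigenfunction expansion, whereas the paper writes it as the sum of diagonal entries in the orthonormal basis $(\sqrt{\pi(x)}\,\phi_x)_x$.
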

\begin{proof}
    Let $\phi_x = \tfrac{1}{\pi(x)} \cdot 1_x$ for $x \in V$, so $\E[\phi_x]= 1$.  Write $\phi'_x = \sqrt{\pi(x)} \cdot \phi_x$, so the collection $(\phi'_x)_{x \in V}$ forms an orthonormal basis.  Since trace is ``the sum of the diagonal entries'', we have
    \[
        \tr(H_t) = \sum_{x \in V} \la \phi'_x, H_t \phi'_x \ra = \sum_{x \in V} \pi(x) \la \phi_x, H_t \phi_x \ra = \E_{\bx \sim \pi} \la H_{t/2} \phi_x, H_{t/2} \phi_x \ra.
    \]
    Similarly, $\tr(L H_t) = \E_{\bx \sim \pi}[ \la H_{t/2} \phi_x, L H_{t/2} \phi_x \ra]$.
    Thus the assumption~\eqref{eqn:key} implies
    \[
        \E_{\bx \sim \pi}[\la H_{t/2} \phi_x, H_{t/2} \phi_x \ra - \tfrac{1}{\gamma} \la H_{t/2} \phi_x, L H_{t/2} \phi_x \ra] \geq \partsize.
    \]
    Select (in $\poly(n)$ time) a particular $x_0 \in V$ achieving at least~$\partsize$ in this expectation.  We define $g = H_{t/2} \phi_{x_0}$ and therefore we have
    \begin{equation} \label{eqn:heat-conc}
        \la g, g \ra - \tfrac{1}{\gamma} \la g, L g \ra \geq \partsize.
    \end{equation}
    Note that $g \geq 0$ since $\phi_{x_0} \geq 0$ and $H_{t/2}$ is positivity-preserving.  Thus $\|g\|_1 = \E[g] = \E[\phi_{x_0}] = 1$.  Further, from~\eqref{eqn:heat-conc} we deduce $\la g, g \ra \geq \partsize$; thus $\Spars[g] \leq 1/\partsize$ as desired.  Finally,~\eqref{eqn:heat-conc} certainly implies $\la g, g \ra - \tfrac{1}{\gamma} \la g, L g \ra \geq 0$, which is equivalent to $\Bdry[g] \leq \gamma$.
\end{proof}

A straightforward calculation now shows that if $L$ has large analytic nullity then we can get good bounds from Theorem~\ref{thm:heat-algorithm}:
\begin{corollary}                                       \label{cor:nullity-to-trace}
    Fix $0 < \gamma \leq 1$.  Let $0 < \alpha \leq \frac13$ and let $k = \nullity_{\alpha \gamma}(L)$.  Assume $k \geq \frac{n^\alpha}{\ln n}$.  Then in $\poly(n)$ time one can find $g \co V \to \R^{\geq 0}$ satisfying $\Bdry[g] \leq \gamma$ and $\Spars[g] \leq 1/\partsize$, where $\partsize = \frac{k}{4n^{\alpha}}$.
\end{corollary}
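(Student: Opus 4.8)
The plan is to apply Theorem~\ref{thm:heat-algorithm} with $\partsize = \frac{k}{4n^\alpha}$ and the single explicit value $t = \frac{\ln n}{\gamma} > 0$, so that the whole proof reduces to checking the trace inequality~\eqref{eqn:key} for this $t$. Expanding in the common eigenbasis of $L$ and $H_t$,
\[
\tr(H_t) - \tfrac1\gamma \tr(LH_t) \;=\; \sum_{i=1}^n e^{-t\lambda_i}\Bigl(1 - \tfrac{\lambda_i}{\gamma}\Bigr),
\]
and I would split this sum according to whether $\lambda_i \le \alpha\gamma$ (there are exactly $k = \nullity_{\alpha\gamma}(L)$ such indices) or not.

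For the $k$ small eigenvalues, each term is at least $(1-\alpha)\,e^{-t\alpha\gamma} = (1-\alpha)\,n^{-\alpha} \ge \tfrac23 n^{-\alpha}$, using $\lambda_i \le \alpha\gamma$, the choice of $t$, and $\alpha \le \tfrac13$; together they contribute at least $\tfrac23 k\, n^{-\alpha}$. For every other index the term is negative only when $\lambda_i > \gamma$, and substituting $\lambda_i = \gamma + y$ with $y > 0$ gives $e^{-t\lambda_i}(1 - \lambda_i/\gamma) = -\tfrac{e^{-t\gamma}}{\gamma}\, y e^{-ty} \ge -\tfrac{e^{-t\gamma}}{e\gamma t}$, since $\max_{y>0} y e^{-ty} = \tfrac{1}{et}$. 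With $t = \tfrac{\ln n}{\gamma}$ this lower bound equals $-\tfrac{1}{e n\ln n}$, so the at most $n$ remaining terms contribute at least $-\tfrac{1}{e\ln n}$ in total. Hence $\tr(H_t) - \tfrac1\gamma\tr(LH_t) \ge \tfrac23 k\,n^{-\alpha} - \tfrac{1}{e\ln n}$.

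The final step uses the hypothesis $k \ge \tfrac{n^\alpha}{\ln n}$, equivalently $\tfrac{1}{e\ln n} \le \tfrac1e k\,n^{-\alpha}$, so the displayed quantity is at least $\bigl(\tfrac23 - \tfrac1e\bigr)k\,n^{-\alpha} > \tfrac14 k\,n^{-\alpha} = \partsize$. This verifies~\eqref{eqn:key}, and Theorem~\ref{thm:heat-algorithm} then outputs in $\poly(n)$ time a $g \co V \to \R^{\ge 0}$ with $\Spars[g] \le 1/\partsize = 4n^\alpha/k$ and $\Bdry[g] \le \gamma$. (If $\partsize < 1$, so the hypothesis $\partsize \ge 1$ of Theorem~\ref{thm:heat-algorithm} fails, the statement is anyway vacuous: the constant function $g \equiv 1$ has $\Bdry[g] = 0 \le \gamma$ and $\Spars[g] = 1 \le 1/\partsize$.)

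I do not expect a genuine obstacle; the only thing to get right is the constant bookkeeping. The point is that $t = \tfrac{\ln n}{\gamma}$ simultaneously makes $e^{-t\alpha\gamma} = n^{-\alpha}$ not too small and $e^{-t\gamma} = n^{-1}$ small enough to absorb the $n$ possibly-negative terms into a constant loss $\tfrac{1}{e\ln n}$, which is in turn dominated by a constant fraction of the main gain precisely because of the assumed lower bound on $k$; the restriction $\alpha \le \tfrac13$ is what leaves $1-\alpha \ge \tfrac23$, which is enough room to beat $\tfrac14 + \tfrac1e$.
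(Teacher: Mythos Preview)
Your proof is correct and follows essentially the same approach as the paper: choose $t = \tfrac{\ln n}{\gamma}$, split the eigenvalue sum at $\lambda_i \lessgtr \alpha\gamma$, lower-bound each small-eigenvalue term by $(1-\alpha)n^{-\alpha}$ and each remaining term by $-\tfrac{1}{en\ln n}$ (the paper obtains this same constant by directly minimizing $(1-r)n^{-r}$ rather than via your substitution $\lambda_i = \gamma + y$), and then invoke Theorem~\ref{thm:heat-algorithm}. Your parenthetical handling of the edge case $\partsize < 1$ is a small bonus the paper does not spell out.
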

\begin{proof}
    We show that~\eqref{eqn:key} from Theorem~\ref{thm:heat-algorithm} holds with $\gamma$, $\partsize$, and $t = \frac{1}{\gamma} \ln n$.  We have
    \begin{equation} \label{eqn:calc1}
        \tr(H_t) - \tfrac{1}{\gamma} \tr(LH_t) = \littlesum_{i=1}^{n} (1 - \tfrac{\lambda_i}{\gamma}) \exp(-t\lambda_i)  = \littlesum_{i=1}^{n} (1-\tfrac{\lambda_i}{\gamma}) n^{-\lambda_i/\gamma}.
    \end{equation}
    The expression $(1-r)n^{-r}$ is decreasing for $r \in [0,1]$; for larger~$r$, it attains its minimum at $r = 1+\frac{1}{\ln n}$, where it has value $-\frac{1}{en \ln n}$.  Thus by distinguishing $r =\frac{\lambda_i}{\gamma} \gtrless \alpha$ in~\eqref{eqn:calc1} we may obtain
    \[
        \eqref{eqn:calc1} \geq \#\{i : \lambda_i \leq \alpha \gamma\} \cdot (1-\alpha)n^{-\alpha} - \#\{i : \lambda_i > \alpha \gamma\} \cdot \tfrac{1}{en\ln n} \geq \frac{k}{n^\alpha}(1-\alpha) - \tfrac{1}{e\ln n}.
    \]
    Using $\alpha \leq \frac13$ and $k \geq \frac{n^\alpha}{\ln n}$, the above is indeed at least $\partsize = \frac{k}{4n^{\alpha}}$.
\end{proof}

Restating the parameters yields:
\begin{corollary} \label{cor:alt-nullity-to-trace}
    Let $0 < \delta \leq 1$. If there exists $\alpha \leq \frac13$ such that $\nullity_{\alpha \gamma}(L) \geq \frac{4}{\delta} n^\alpha$, then in  $\poly(n)$ time one can find $g \co V \to \R^{\geq 0}$ satisfying $\Spars[g] \leq \delta$ and $\Bdry[g] \leq \gamma$.
\end{corollary}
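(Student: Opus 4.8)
The plan is to reparametrize Corollary~\ref{cor:nullity-to-trace}; the statement is essentially that corollary rewritten with $\delta$ in place of the quantity $4n^\alpha/k$. Given the hypothesized $\alpha \leq \frac13$, I would set $k = \nullity_{\alpha\gamma}(L)$, so that by assumption $k \geq \frac{4}{\delta} n^\alpha$. The one precondition of Corollary~\ref{cor:nullity-to-trace} that must be verified is $k \geq \frac{n^\alpha}{\ln n}$. Since $0 < \delta \leq 1$ we have $\frac{4}{\delta} \geq 4$, and (for $n \geq 2$) $4 \geq \frac{1}{\ln n}$ because $\ln n \geq \ln 2 > \frac14$; hence $k \geq \frac{4}{\delta} n^\alpha \geq 4 n^\alpha \geq \frac{n^\alpha}{\ln n}$, as needed.

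Applying Corollary~\ref{cor:nullity-to-trace} with these $\alpha$, $\gamma$, and $k$ then yields, in $\poly(n)$ time, a function $g \co V \to \R^{\geq 0}$ with $\Bdry[g] \leq \gamma$ and $\Spars[g] \leq 1/\partsize$, where $\partsize = \frac{k}{4 n^\alpha}$. From $k \geq \frac{4}{\delta} n^\alpha$ we get $\partsize \geq \frac{1}{\delta}$, and therefore $\Spars[g] \leq 1/\partsize \leq \delta$, which is exactly the desired bound.

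I do not expect any real obstacle: the content is entirely in Corollary~\ref{cor:nullity-to-trace}, and passing to a larger nullity only improves the sparsity bound. The sole minor point is the cheap inequality $\frac{4}{\delta} \geq \frac{1}{\ln n}$ used to discharge the $k \geq n^\alpha/\ln n$ hypothesis for free; and in the degenerate case $n = 1$ there is nothing to prove, since then $\nullity_{\alpha\gamma}(L) = 1 < 4 \leq \frac{4}{\delta} n^\alpha$, so the hypothesis cannot hold.
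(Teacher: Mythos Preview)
Your proposal is correct and follows exactly the approach of the paper, which simply states that Corollary~\ref{cor:alt-nullity-to-trace} is a restatement of Corollary~\ref{cor:nullity-to-trace} with different parameters. If anything, you are more careful than the paper: you explicitly verify the precondition $k \geq n^\alpha/\ln n$ and handle the degenerate case $n=1$, neither of which the paper bothers to mention.
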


An alternative restatement of the parameters yields our main Theorem~\ref{thm:our-spectral-profile}:  simply take $\alpha = \tfrac{1}{A \log_k n}$ and $\gamma = A \lambda_{k} \log_k n$ in Corollary~\ref{cor:nullity-to-trace}.

\section{An algorithm for Analytic Small-Set Expansion}

In~\cite{ABS10} it is shown that when $L$ has small analytic nullity, one can find sparse sets by brute-force search through low-eigenvalue eigenspace.
We present a very similar algorithm for finding analytically sparse sets.
\begin{lemma}                                     \label{lem:low-nullity}
    Suppose there exists $f \co V \to \R$ with
    \[
        \Spars[f] \leq \delta \leq 1/2, \qquad \Bdry[f] \leq \eps \leq 1/4.
    \]
    Let $2\eps \leq \eta \leq 1$.  Then in time $\exp(O(\nullity_\eta(L) \log(\eta/\eps)))\cdot \poly(n)$ one can find $g \co V \to \R$ satisfying
    \[
        \Spars[g] \leq \delta + O(\eps/\eta + \sqrt{\delta \eps/\eta}) \leq O(\delta + \eps/\eta), \qquad \Bdry[g] \leq \eta.
    \]
\end{lemma}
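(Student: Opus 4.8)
The plan is to project $f$ onto the low-eigenvalue eigenspace of $L$, argue that this projection $\tilde f$ is a good approximation (in the $\|\cdot\|_2$ sense) of $f$ since most of $f$'s ``energy'' lies below the threshold $\eta$, and then brute-force search a fine net of the unit sphere of this low-dimensional space to recover a vector close to $\tilde f$. Concretely, let $k = \nullity_\eta(L)$, let $W = \mathrm{span}\{\phi_i : \lambda_i \leq \eta\}$, and write $f = \tilde f + f^{\perp}$ with $\tilde f$ the orthogonal projection of $f$ onto $W$. Normalizing so that $\|f\|_2 = 1$, the hypothesis $\Bdry[f] = \la f, Lf\ra \leq \eps$ together with $\la f, Lf\ra = \sum_i \lambda_i \hat f(i)^2 \geq \eta \|f^\perp\|_2^2$ gives $\|f^\perp\|_2^2 \leq \eps/\eta$, hence $\|\tilde f\|_2^2 \geq 1 - \eps/\eta$. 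The first task is to control $\Spars[\tilde f] = \|\tilde f\|_1^2/\|\tilde f\|_2^2$: by the triangle inequality $\|\tilde f\|_1 \leq \|f\|_1 + \|f^\perp\|_1 \leq \|f\|_1 + \sqrt{\|f^\perp\|_2^2}$ (using $\|\cdot\|_1 \leq \|\cdot\|_2$ with respect to $\pi$), so $\|\tilde f\|_1 \leq \sqrt{\delta} + \sqrt{\eps/\eta}$; squaring and dividing by $\|\tilde f\|_2^2 \geq 1 - \eps/\eta$ yields $\Spars[\tilde f] \leq (\sqrt\delta + \sqrt{\eps/\eta})^2/(1-\eps/\eta) = \delta + O(\eps/\eta + \sqrt{\delta\eps/\eta})$, using $\eps/\eta \leq 1/2$.

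Next I would handle the boundary of the search output. One cannot afford to output $\tilde f$ itself exactly since we only get an approximation $g$ from a net, but $g \in W$ automatically satisfies $\Bdry[g] = \la g, Lg\ra/\|g\|_2^2 \leq \eta$ because every eigenvalue contributing to a vector in $W$ is at most $\eta$ — this is the whole point of restricting to $W$. So \emph{any} nonzero $g \in W$ has analytic boundary at most $\eta$; the only thing the net needs to guarantee is that we find some $g \in W$ whose sparsity is comparable to that of $\tilde f$. This is where the net comes in: take a $\rho$-net $\mathcal N$ of the unit sphere of $W$ in the $\|\cdot\|_2$ norm, of size $(O(1/\rho))^k$, choose $g \in \mathcal N$ with $\|g - \tilde f/\|\tilde f\|_2\|_2 \leq \rho$, and argue $\|g\|_1 \leq \|\tilde f/\|\tilde f\|_2\|_1 + \rho$ (again $\|\cdot\|_1 \leq \|\cdot\|_2$), so that $\Spars[g] = \|g\|_1^2 \leq (\sqrt{\Spars[\tilde f]} + \rho)^2$. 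Choosing $\rho = \Theta(\eps/\eta)$ (which is permissible since $\rho \leq 1$ and keeps the net size $\exp(O(k \log(\eta/\eps)))$) absorbs the net error into the already-present $O(\eps/\eta)$ slack, giving $\Spars[g] \leq \delta + O(\eps/\eta + \sqrt{\delta\eps/\eta})$. The time bound is the net size $(O(\eta/\eps))^k$ times $\poly(n)$ for the linear-algebra to build $W$ and evaluate $\Spars,\Bdry$ on each net point, matching $\exp(O(\nullity_\eta(L)\log(\eta/\eps)))\cdot\poly(n)$.

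The main obstacle — or rather the point requiring the most care — is the interplay between the two norms under projection: the $\ell_2$ control on $f^\perp$ coming from the spectral gap does not a priori give $\ell_1$ control, so the step $\|f^\perp\|_1 \leq \|f^\perp\|_2$ (valid because $\pi$ is a probability measure, hence $\|\cdot\|_1 \leq \|\cdot\|_2$) is doing real work, and one must track that the resulting sparsity loss is additive of order $\sqrt{\eps/\eta}$ inside the square root rather than multiplicative. A secondary subtlety is that the net is built on the sphere of $W$ in the $\ell_2$ metric, so transferring $\ell_2$-closeness to $\ell_1$-closeness (again via $\|\cdot\|_1 \leq \|\cdot\|_2$) is what makes the net argument go through cleanly without blowing up the dimension dependence. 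Finally, one should double-check the benign case $\nullity_\eta(L) = 0$: then $W = \{0\}$, but $\|f^\perp\|_2^2 \leq \eps/\eta \leq 1/2 < 1 = \|f\|_2^2$ forces $W \neq \{0\}$, so this case cannot occur under the hypotheses, and no separate treatment is needed.
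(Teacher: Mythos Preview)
Your proposal is correct and follows essentially the same approach as the paper: project onto the span $W$ of eigenfunctions with eigenvalue at most $\eta$, use $\la f,Lf\ra \leq \eps$ to bound the $\ell_2$ mass outside $W$ by $\eps/\eta$, pass to a net of the unit sphere of $W$, and control $\|g\|_1$ via $\|\cdot\|_1 \leq \|\cdot\|_2$. The only cosmetic differences are that the paper compares $g$ directly to $f$ (rather than first to $\tilde f/\|\tilde f\|_2$) and uses net radius $\tfrac12\sqrt{\eps/\eta}$ instead of your $\Theta(\eps/\eta)$; both choices yield the stated $\exp(O(k\log(\eta/\eps)))$ bound and the same sparsity estimate.
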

\begin{remark}  It is also quite easy to show $g$ will satisfy $\Bdry[g] \leq O(\sqrt{\eps/\eta})$, which is useful if $\eta \gg \eps^{1/3}$.  We will not need this parameter setting, so we omit the proof. \noteryan{it's in the tex, commented out} \ignore{It remains to show that also $\Bdry[g] = \la g, L g\ra \leq \eps + O(\sqrt{\eps/\eta})$. For this,
    \[
        \la g,Lg \ra = \la f,Lf \ra + \la f,L(g-f) \ra + \la g-f,Lg \ra \leq \eps + \|L(g-f)\|_2 + \|g-f\|_2\|Lg\|_2,
    \]
    where we used $\Bdry[g] \leq \eps$, Cauchy--Schwarz twice, and $\|f\|_2 = 1$.  We also have $\|L g\|_2 \leq \eta$ (since $g \in U$ is unit) and $\|L(g-f)\|_2 \leq 2\|g-f\|_2$ (since $L$'s eigenvalues are at most~$2$).  Thus
    \[
        \la g, Lg \ra \leq \eps + (2+\eta)\|f-g\|_2 \leq \eps + 6\sqrt{\eps/\eta},
    \]
    as needed.}
\end{remark}
\begin{proof}
    Let $\psi_1, \dots, \psi_m$ be an orthonormal basis of eigenfunctions for $L$, corresponding to eigenvalues $\lambda_1, \dots, \lambda_n$.  Without loss of generality, assume $\|f\|_2 = 1$.  Write $m = \nullity_\eta(L)$ and write $U$ for the dimension-$m$ subspace spanned by $\psi_1, \dots, \psi_m$.  Express $f = \sum_{i=1}^{n} c_i \psi_i$, so $\sum c_i^2 =1$ by the orthonormality of the $\psi_i$'s.  We have
    \[
        \eps \geq \Bdry[f] = \la f, L f \ra = \sum_{i=1}^{n} \lambda_i c_i^2 \geq \sum_{i> m} \lambda_i c_i^2 \geq \eta \sum_{i > m} c_i^2. 
    \]
    In other words, if $f_U$ denotes $\sum_{i \leq m} c_i \psi_i $ then $\|f - f_U\|_2^2 \leq \eps/\eta$ (which is at most $1/2$ by the assumption on $\eta$).  If we define $u \in U$ to be the unit vector $f_U/\|f_U\|_2$, it follows that
    \[
        \|f - u\|_2 \leq \sqrt{2\eps/\eta}.
    \]
    As in~\cite{ABS10} we can now consider all $g$ in a $.5\sqrt{\eps/\eta}$-net for the unit sphere of~$U$.\noteryan{Would it be simpler to do a net on the ball?  And not normalize $f_U$?} The cardinality of this net is $\exp(O(m \log(\eta/\eps)))$.  One such $g$ will satisfy
    \[
        \|u - g\|_2 \leq .5\sqrt{\eps/\eta} \quad \text{and hence} \quad \|f - g\|_2 \leq 2\sqrt{\eps/\eta}.
    \]
    For this $g$ we have
    \[
        \|g\|_1 \leq \|f\|_1 + \|f - g\|_1 \leq \sqrt{\Spars[f]} + \|f-g\|_2 \leq \sqrt{\delta} + 2\sqrt{\eps/\eta}
    \]
    and hence $\Spars[g] \leq \delta + O(\eps/\eta + \sqrt{\delta \eps/\eta})$, as desired.  Since $g$ is a unit vector in $U$ we may also immediately conclude $\Bdry[g] \leq \eta$.
\end{proof}

From Corollary~\ref{cor:alt-nullity-to-trace} we know that if $L$ has large analytic nullity then there is automatically an (easily findable) $f \co V \to \R$ which is analytically sparse and has small analytic boundary.  On the other hand, if $L$ has small analytic nullity, the above lemma can solve the Analytic Small-Set Expansion problem in not too much time.  Combining these facts lets us prove our Theorem~\ref{thm:our-alg}, restated here for convenience:
\paragraph{Theorem~\ref{thm:our-alg}.} \emph{For any $\alpha \leq \frac13$ and $C \geq 1$, there exists an algorithm running in time $\exp(O(n^\alpha) \cdot\tfrac{1}{\delta}\log(C/\delta))$ with the following guarantee:  If there exists $f \co V \to \R$ with $\Spars[f] \leq \delta \leq 1/2$ and $\Bdry[f] \leq \eps \leq 1/4$, the algorithm finds $g \co V \to \R$ with $\Spars[g] \leq \delta \cdot(1+1/C)$ and $\Bdry[g] \leq O(\frac{C^2}{\alpha \delta}) \cdot \eps$.}

\begin{proof}
    Set $\gamma = \frac{B}{\alpha \delta} \cdot \eps$; we will eventually take $B = O(C^2)$. If $\nullity_{\alpha \gamma}(L) \geq \frac{4}{\delta} n^{\alpha}$ then from Corollary~\ref{cor:alt-nullity-to-trace} we can find $g$ with $\Spars[g] \leq \delta$, $\Bdry[g] \leq \gamma$ in $\poly(n)$ time; in fact, here we don't even need to assume the existence of~$f$. Otherwise, Lemma~\ref{lem:low-nullity} tells us that in time $\exp(O(n^\alpha) \cdot\tfrac{1}{\delta}\log(B/\delta))$ we can find a $g$ satisfying
    \[
        \Spars[g] \leq \delta + O(\tfrac{\delta}{B} + \sqrt{\tfrac{\delta^2}{B}}) = \delta \cdot (1+O(1/\sqrt{B})), \qquad \Bdry[g] \leq \alpha \gamma \leq \gamma.
    \]
    Thus the result follows by taking $B = O(C^2)$.
\end{proof}

\ignore{
\section{Algorithm for small-set expansion}
Let us now combine these results to give an algorithm for the Analytic Expansion Problem.

We can combine these results to solve the small set expansion problem.  Assume there exists a set $S$ with $\pi(S) \leq \delta$ and $\Phi(S) \leq \varepsilon$.

If $\mathrm{rank}_{ \eta}(L) \geq n^{3\eta/\gamma}$, we are in the high-rank case and can find a set $S'$ such that $\pi(S') =  O(\delta)$ and $\Phi(S') = O(\sqrt{\gamma})$ in polynomial time.  If  $\mathrm{rank}_{ \eta}(L) < n^{3\eta/\gamma}$, we can find a set $S'$ such that $\pi(S') = \delta + O(\varepsilon/\eta)$ and $\Phi(S') = \min(\sqrt{\eta}, O((\varepsilon/\eta)^{1/4})$ in time $\exp(n^{3\eta/\gamma}\log 1/\varepsilon)$.

We can recover the result of [ABS] Theorem 2.1.  For $\alpha \in (0,1)$, we can set $\eta = \varepsilon^{1-\alpha/3}$ and $\gamma = \varepsilon^{2\alpha/3}$ to get a set $S'$ such that $\pi(S') = O(\delta)$ and $\Phi(S') = O(\varepsilon^{\alpha/3})$ in time $\exp(n^{O(1-\alpha)}\log 1/\varepsilon)$.  We actually do slightly better in the low-rank case, getting $S'$ such that $\Phi(S') = O(\varepsilon^{1/2-\alpha/6}) \leq O(\varepsilon^{\alpha/3})$.

Alternatively, we can set $\eta = \varepsilon/\delta$ and $\gamma = \varepsilon/\delta^2$ to get $S'$ such that $\pi(S') = O(\delta)$ and $\Phi(S') = O(\sqrt{\varepsilon}/\delta)$ in time $\exp(n^{O(\delta)}\log 1/\varepsilon)$.
}

\section{The probability a random walk stays entirely within a set}

In \cite{OT12} the authors show that a $t$-step discrete time random walk starting from a random vertex in $S \subseteq V$ stays entirely within $S$ with probability at least $\left(1-\frac{\Phi[S]}{2}\right)^t$.  We give a proof of a similar result for continuous-time random walks using Markov chain methods.

\paragraph{Theorem~\ref{thm:cont_rand_walk} restated.} \emph{For any $\emptyset \neq S \subseteq V$ and real $t > 0$, let $C(t,S)$ denote the probability that a continuous-time-$t$ random walk, started from a random $\bx \sim S$, stays entirely within~$S$.  Then $C(t,S) \geq \exp(-t\Phi[S])$.}

\begin{proof}
Let us define an operator $K_S$ on functions $f \colon V \to \R$ as follows:  $K_Sf(x) = \E_{\by \sim x}[1_S(x)1_S(\by)f(\by)]$, where $1_S$ is the indicator function for $S$.  It is easy to see that~$K_S$ is self-adjoint; thus it has~$n$ real eigenvalues and~$n$ linearly independent eigenvectors.  We also define $L_S = id - K_S$ and $H_{t,S} = \exp(-tL_S)$.  Let $v_1, \dots, v_n$ be an orthonormal basis of eigenvectors of~$L_S$ (which are also eigenvectors of $K_S$ and $H_{t,S}$) and let $\lambda_1, \dots, \lambda_n$ be the corresponding eigenvalues of $L_S$. Finally, we define $\phi'_S = \tfrac{1}{\sqrt{\pi(S)}} \cdot 1_S$ and write $\phi'_S = \sum_i c_i v_i$ for some constants~$c_i$.  Since $\|\phi'_S\|_2 = 1$ it follows that $\sum_i c_i^2 = 1$:

First, we will show that $\Phi[S] =  \sum_i c_i^2 \lambda_i$.
\begin{align*}
\Phi[S] &= \Pr_{\substack{\bx \sim \pi \\ \by \sim \bx}}[\by \notin S \mid \bx \in S] \\
             &= \Pr_{\substack{\bx \sim \pi \\ \by \sim \bx}}[\by \notin S \wedge \bx \in S] / \Pr_{\bx \sim \pi}[\bx \in S] \\
             &= \tfrac{1}{\pi(S)}\E_{\substack{\bx \sim \pi \\ \by \sim \bx}}[1_S(\bx)(1_S(\bx)-1_S(\by))] \\
             &= \tfrac{1}{\pi(S)}\E_{\substack{\bx \sim \pi \\ \by \sim \bx}}[1_S(\bx)(1_S(\bx)-1_S(\bx) 1_S(\by))] \\
             &= \tfrac{1}{\pi(S)}\E_{\bx \sim \pi}[1_S(\bx)(1_S(\bx)-1_S(\bx)\E_{\by \sim \bx}[1_S(\by)])] \\
             &= \tfrac{1}{\pi(S)}\E_{\bx \sim \pi}[1_S(\bx)(id1_S(\bx)-K_S1_S(\bx))] \\
             &= \la \phi'_S, L_S \phi'_S \ra \\
             &= \sum_i c_i^2 \lambda_i.
\end{align*}

Now we show that $C(t,S) = \sum_i c_i^2 \exp(-t \lambda_i)$.  Let $\bw_0, \dots, \bw_{\btau}$ be the states of a time-$t$ continuous-time random walk in~$G$; note that this is the same as a $\btau$-step discrete-time random walk, where $\btau \sim \mathrm{Poisson}(t)$.  Let $\bW$ denote the set of all states visited.  Then:
\begin{align*}
C(t,S) &= \Pr[\bW \subseteq S \mid \bw_0 \in S] \\
           &= \tfrac{1}{\pi(S)}\E[1_S(\bw_0) 1_S(\bw_1) \ldots 1_S(\bw_{\btau})] \\
           &= \tfrac{1}{\pi(S)} \E_{\bx \sim \pi}\E_{\btau \sim \mathrm{Poisson}(t)}[1_S(\bx)K^{\btau}_S 1_S(\bx)] \\
           &= \tfrac{1}{\pi(S)}\E_{\bx \sim \pi}[1_S(\bx)H_{t,S}1_S(\bx)] \\
           &= \la \phi'_S, H_{t,S} \phi'_S \ra \\
           &= \sum_i c_i^2 \exp(-t \lambda_i).
\end{align*}

To complete the proof, we need to show that $\sum_i c_i^2 \exp(-t \lambda_i) \geq \exp(-t \sum_i c_i^2 \lambda_i)$.  This follows immediately by the convexity of the exponential function and Jensen's inequality.
\ignore{
\begin{align*}
C(t,S) &= \sum_i c_i^2 \exp(-t \lambda_i) \\
           &\geq \exp(-t \sum_i c_i^2 \lambda_i) \\
           &= \exp(-t \Phi[S]).
\end{align*}
\qedhere
}
\end{proof}

\section*{Acknowledgments}
We thank James Lee, David Steurer, Yu Wu, and Yuan Zhou for helpful discussions.

\bibliographystyle{alpha}
\bibliography{/dropbox/bib/odonnell-bib}

\end{document}